\documentclass[12pt]{article}
\pdfoutput=1

\usepackage{xcolor}
\definecolor{blue}{RGB}{0,70,140}
\definecolor{green}{RGB}{100,140,0}
\definecolor{red}{RGB}{190,10,50}

\colorlet{structurecolor}{red}
\colorlet{linkcolor}{blue}
\colorlet{citecolor}{green}

\usepackage[utf8]{inputenc}
\usepackage[T1]{fontenc}
\usepackage{mathpazo}

\usepackage[english]{babel}

\usepackage[a4paper,margin=2.5cm]{geometry}
\usepackage{parskip}

\newcommand{\affiliation}{\footnote}
\newcommand{\affiliationmark}[1][\value{footnote}-1]{\footnotemark[\numexpr#1+1\relax]}
\setcounter{footnote}{1}

\usepackage[shortlabels]{enumitem}
\setlist{itemsep=0ex,topsep=0ex,parsep=0.4ex}

\usepackage{amsmath,amsfonts,amssymb,amsthm,mathtools,thmtools,thm-restate,bbm,centernot}

\usepackage[hyphens]{url} 
\usepackage[hidelinks,colorlinks,pagebackref]{hyperref} 
\hypersetup{citecolor=citecolor,linkcolor=linkcolor,urlcolor=linkcolor}
\usepackage[capitalise,nameinlink,noabbrev,compress]{cleveref} 


\renewcommand*{\backref}[1]{}
\renewcommand*{\backrefalt}[4]{
	\ifcase #1 Not cited.%
	\or $\uparrow$#2%
	\else $\uparrow$#2%
	\fi%
}


\declaretheorem[name=Definition,Refname={Definition,Definitions},numberwithin=section,style=definition]{definition}
\declaretheorem[name=Theorem,Refname={Theorem,Theorems},numberlike=definition,style=plain]{theorem}

\declaretheorem[name=Lemma,Refname={Lemma,Lemmas},numberlike=theorem,style=plain]{lemma}

\declaretheorem[name=Problem,Refname={Problem,Problems},numberlike=theorem,style=plain]{problem}

\newcommand{\defn}[1]{\textcolor{structurecolor}{\emph{#1}}}

\mathtoolsset{centercolon}

\let\emptyset\varnothing

\renewcommand{\l}{\mathopen{}\mathclose\bgroup\left}
\renewcommand{\r}{\aftergroup\egroup\right}

\newcommand{\st}{\ifnum\currentgrouptype=16 \mathrel{}\middle|\mathrel{}\else\mathrel{|}\fi}

\DeclarePairedDelimiter{\set}{\{}{\}}
\DeclarePairedDelimiter{\abs}{\lvert}{\rvert}

\DeclarePairedDelimiter{\floor}{\lfloor}{\rfloor}
\DeclarePairedDelimiter{\ceil}{\lceil}{\rceil}

\newcommand{\subs}{\subseteq}
\newcommand{\nsubs}{\nsubseteq}
\newcommand{\sups}{\supseteq}

\newcommand{\sm}{\setminus}
\newcommand{\eps}{\varepsilon}

\makeatletter
\newcommand{\generate}[4]{%
  \def\@tempa{#1} 
  \count@=`#3
  \loop
  \begingroup\lccode`?=\count@
  \lowercase{\endgroup\@namedef{\@tempa ?}{#2{?}}}%
  \ifnum\count@<`#4
  \advance\count@\@ne
  \repeat
}
\makeatother

\generate{b}{\mathbb}{A}{Z}
\generate{c}{\mathcal}{A}{Z}

\newcommand{\DeclareMath}[2]{\newcommand{#1}{\mathnormal{#2}}}

\newcommand{\DeclareBinaryMathOperator}[2]{\newcommand{#1}{\mathbin{#2}}}

\DeclareMath{\N}{\mathbb{N}}
\DeclareMath{\Z}{\mathbb{Z}}
\DeclareMath{\Q}{\mathbb{Q}}
\DeclareMath{\R}{\mathbb{R}}
\DeclareMath{\C}{\mathbb{C}}
\DeclareMath{\F}{\mathbb{F}}

\DeclareMath{\ind}{\mathbbm{1}}
\DeclareBinaryMathOperator{\divides}{|}

\DeclareMathOperator{\pr}{\bP}
\DeclareMathOperator{\ex}{\bE}

\title{Lower bounds for graph reconstruction with maximal independent set queries}
\author{Lukas Michel\affiliation{Mathematical Institute, University of Oxford, United Kingdom (\textsf{\{\href{mailto:michel@maths.ox.ac.uk}{michel},\href{mailto:scott@maths.ox.ac.uk}{scott}\}@maths.ox.ac.uk}). Research of Alex Scott supported by EPSRC grant EP/X013642/1.} \and Alex Scott\affiliationmark}
\date{4 April 2024}

\begin{document}
\maketitle

\begin{abstract}
    We investigate the number of maximal independent set queries required to reconstruct the edges of a hidden graph. We show that randomised adaptive algorithms need at least $\Omega(\Delta^2 \log(n / \Delta) / \log \Delta)$ queries to reconstruct $n$-vertex graphs of maximum degree $\Delta$ with success probability at least $1/2$, and we further improve this lower bound to $\Omega(\Delta^2 \log(n / \Delta))$ for randomised non-adaptive algorithms. We also prove that deterministic non-adaptive algorithms require at least $\Omega(\Delta^3 \log n / \log \Delta)$ queries.

    This improves bounds of Konrad, O'Sullivan, and Traistaru, and answers one of their questions. The proof of the lower bound for deterministic non-adaptive algorithms relies on a connection to cover-free families, for which we also improve known bounds.
\end{abstract}

\section{Introduction}

The graph reconstruction problem is the problem of determining the edges of a hidden graph $G = (V,E)$ through queries that reveal some partial information about the graph. For a given type of query, the following natural question presents itself.
\begin{quote}
    What is the minimum number of queries required to reconstruct any graph?
\end{quote}
This problem has been extensively studied for independent set queries that reveal whether a queried set of vertices contains an edge. Bounds on the minimum number of queries required were studied for general graphs \cite{AC08,AB19}, specific families of graphs \cite{AA05,ABKRS04}, and hypergraphs \cite{AC06,ABM14}. In related graph problems, other query models were considered. For instance, maximal matching queries were used to approximate maximum matchings in a graph \cite{KK20,KNS23}, and vertex degree queries were used to estimate the average degree of a graph \cite{F06,GR08}.

Konrad, O'Sullivan, and Traistaru \cite{KOT24} recently investigated graph reconstruction for maximal independent set queries. In this setting, an algorithm can query a set of vertices $Q \subs V$ to obtain a maximal independent set of the subgraph of $G$ induced by $Q$. Using this information, the algorithm has to reconstruct all edges of the graph.

There are two important properties that influence the minimum number of queries required by such an algorithm. Firstly, queries can either be adaptive, which means that they depend on the outcomes of other queries, or they are non-adaptive if they do not. While adaptive algorithms require fewer queries, non-adaptive algorithms are desirable because their queries can be performed in parallel. Secondly, algorithms can randomise their queries. In contrast to deterministic algorithms which always reconstruct the graph correctly, randomised algorithms only succeed with high probability, but they might use far fewer queries.

\textbf{Randomised algorithms.} One way of choosing non-adaptive queries for a randomised algorithm is to choose each query by independently including each vertex with a fixed probability. By analysing this strategy, Konrad, O'Sullivan, and Traistaru \cite{KOT24} showed that $\cO(\Delta^2 \log n)$ non-adaptive queries suffice to reconstruct graphs with $n$ vertices and maximum degree $\Delta$ with high probability.

\begin{theorem}[Konrad, O'Sullivan, and Traistaru]\label{thm:randnonadupperbound}
    There is a randomised non-adaptive algorithm that uses $\cO(\Delta^2 \log n)$ queries to reconstruct any graph of maximum degree $\Delta$ with high probability.
\end{theorem}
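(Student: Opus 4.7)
My plan is to analyse the following simple randomised algorithm: make $t = C\Delta^2 \log n$ independent queries $Q_1, \ldots, Q_t$, where each $Q_i$ is formed by including every vertex independently with probability $p = 1/(2\Delta)$, and then output the edge set
\[
    E' = \{uv : \text{no returned maximal independent set contains both } u \text{ and } v\}.
\]
One direction is immediate: if $uv \in E$, then $u$ and $v$ cannot both lie in any independent set of $G$, let alone any returned MIS, so $uv \in E'$ automatically. The content of the argument is to show that with high probability the converse also holds, and hence $E' = E$.

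To this end, I would introduce the following notion. Call a query $Q_i$ \emph{isolating} for a pair $\{u,v\}$ if $u,v \in Q_i$ and $(N(u) \cup N(v)) \setminus \{u,v\}$ is disjoint from $Q_i$. The key observation is that if $Q_i$ is isolating for a non-edge $\{u,v\}$, then $u$ and $v$ are both isolated vertices of the induced subgraph $G[Q_i]$, so they belong to every maximal independent set of $G[Q_i]$. In particular, a single isolating query for a non-edge $\{u,v\}$ forces $uv \notin E'$, independently of which MIS the oracle chooses to return; this robustness against adversarial oracle choice is what makes the approach work cleanly.

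It then remains to show that, with high probability, every non-edge is isolated by at least one of the $t$ queries. Fix a non-edge $\{u,v\}$. Since $uv \notin E$, we have $|(N(u) \cup N(v)) \setminus \{u,v\}| = |N(u) \cup N(v)| \leq 2\Delta$, so the probability that a single query is isolating for $\{u,v\}$ is at least $p^2(1-p)^{2\Delta}$, which is $\Omega(1/\Delta^2)$ for the choice $p = 1/(2\Delta)$ since $(1 - 1/(2\Delta))^{2\Delta}$ is bounded below by $1/4$ for all $\Delta \geq 1$. By independence of the queries, the probability that no $Q_i$ is isolating for $\{u,v\}$ is at most $\exp(-\Omega(t/\Delta^2))$, which can be made at most $n^{-3}$ by choosing $C$ sufficiently large. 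A union bound over the at most $\binom{n}{2}$ non-edges then yields overall failure probability $o(1)$.

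The main conceptual step is recognising that the coarse MIS feedback can be converted into a clean deterministic certificate for non-edges via the isolation event; once this is identified, the choice $p \asymp 1/\Delta$ is forced by optimising $p^2(1-p)^{2\Delta}$ and the $\log n$ factor arises from the union bound, so the rest is a routine probabilistic calculation. I do not foresee a substantial obstacle beyond pinning down the isolation event; the only mild care needed is to verify that the success analysis is not sensitive to the oracle's (possibly adversarial) choice of MIS, which the isolation argument handles automatically.
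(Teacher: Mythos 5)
Your proposal is correct and follows essentially the same strategy the paper attributes to Konrad, O'Sullivan, and Traistaru: non-adaptive random queries formed by including each vertex independently with probability $\Theta(1/\Delta)$, with non-edges certified by the event that both endpoints appear in a query while their neighbourhoods are excluded (forcing both into every maximal independent set), followed by a union bound. The paper does not reproduce this proof but describes precisely this scheme, and your calculation of the isolation probability and the resulting $\cO(\Delta^2 \log n)$ query count is sound.
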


To complement this upper bound, Konrad, O'Sullivan, and Traistaru proved that any randomised adaptive algorithm that succeeds with probability at least $1/2$ requires at least $\Omega(\Delta^2 + \log n)$ queries. They asked whether an algorithm could attain this lower bound. We show that this is not possible.

\begin{theorem}\label{thm:randadlowerbound}
    The number of queries of a randomised adaptive algorithm that reconstructs any graph of maximum degree $\Delta$ with probability at least $1/2$ is at least
    \[
        \Omega\l(\frac{\Delta^2 \log\l(\frac{n}{\Delta}\r)}{\log \Delta}\r).
    \]
\end{theorem}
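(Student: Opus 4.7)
The plan is to invoke Yao's minimax principle and reduce the problem to proving a lower bound against deterministic algorithms on a carefully chosen input distribution $\mu$. I would take $\mu$ to be uniform over all graphs obtained by placing $\Delta$ pairwise vertex-disjoint cliques $K_{\Delta+1}$ inside $[n]$ and leaving the other $n - \Delta(\Delta+1)$ vertices isolated; every such graph has maximum degree exactly $\Delta$, and a standard multinomial count yields $|\mathrm{supp}(\mu)| = \Theta((n/\Delta)^{\Delta^{2}})$, so $\log |\mathrm{supp}(\mu)| = \Theta(\Delta^{2} \log(n/\Delta))$.

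Against any deterministic algorithm $A$, I would run the following online adversary. We maintain a family $\cF \subs \mathrm{supp}(\mu)$ of graphs still consistent with the responses given so far. For each query $Q$, the adversary selects an MIS $M$ of $G[Q]$ (for some $G \in \cF$) that maximises the size of the updated family $\cF \cap \{G' : M \in \mathrm{MIS}(G'[Q])\}$. Since $\mu$ is uniform on $\cF$ and $A$ is deterministic, $A$'s probability of correctly outputting $G$ after $T$ queries is at most $1/|\cF_T|$. The lower bound would then follow from the central claim that the adversary can guarantee $|\cF_{t+1}| \geq |\cF_t|/\Delta$ at every step: iterating yields $|\cF_T| \geq (n/\Delta)^{\Delta^{2}}/\Delta^{T} \geq 2$ whenever $T = o(\Delta^{2}\log(n/\Delta)/\log \Delta)$, forcing the success probability below $1/2$ and contradicting the assumption via Yao.

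The hardest step, and the main obstacle, is establishing the per-query factor $\Delta$. A naive counting argument shows only that each $G \in \cF$ admits up to $(\Delta+1)^{\Delta}$ valid MIS responses to a single query $Q$ (one representative per clique intersecting $Q$), which gives a per-query factor of $(\Delta+1)^{\Delta}$ and thus a lower bound of only $\Omega(\Delta \log(n/\Delta)/\log \Delta)$, weaker than our target by a factor $\Delta$. Tightening to $\Delta$ should use a combination of: the $\Delta!$ permutation symmetry among the $\Delta$ cliques, the abundant freedom to relocate cliques among the $n - \Delta(\Delta+1) \gg \Delta^{2}$ isolated vertices (so that many placements share compatible MIS responses), and a weighted-averaging or entropy computation that pairs many distinct graphs with a single cleverly chosen adversary response. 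I expect this delicate combinatorics to mirror the cover-free family arguments used elsewhere in the paper for the non-adaptive bounds, with the $\log \Delta$ factor reflecting the standard adaptive/non-adaptive gap familiar from group testing.
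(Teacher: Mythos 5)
Your overall framework (Yao's principle plus a counting/adversary argument over a hard family of graphs) is the same as the paper's, but the proof has a genuine gap at exactly the point you flag: the per-query branching factor. With your hard distribution of $\Delta$ disjoint copies of $K_{\Delta+1}$, a single query $Q$ meeting all the cliques admits up to $(\Delta+1)^{\Delta}$ distinct maximal independent sets (one representative per clique), so the honest information-theoretic bound is only $t=\Omega(\Delta\log(n/\Delta)/\log\Delta)$, as you note. The claim that symmetry or an entropy averaging argument recovers the missing factor of $\Delta$ is not substantiated, and it is doubtful for adaptive algorithms: after one query the algorithm learns a representative of each clique, and subsequent queries containing all these representatives can plausibly extract $\Theta(\Delta\log\Delta)$ bits each, so the claimed invariant $\abs{\cF_{t+1}}\ge\abs{\cF_t}/\Delta$ is not something an adversary can clearly guarantee for this family. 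Since this step is the entire content of the lower bound, the proof as written does not go through.

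The paper avoids the issue by choosing a different hard family rather than by a cleverer analysis of yours. Fix a single set $U$ with $\abs{U}=\lceil\Delta/2\rceil$, make $U$ a clique and $V\sm U$ an independent set, and let the edges between $U$ and $V\sm U$ be arbitrary subject to the degree bound; this family has at least $((n-\Delta)/\Delta)^{\Delta^2/4}$ members, so it carries the same $\Omega(\Delta^2\log(n/\Delta))$ entropy as yours. The key structural point is that for \emph{every} query $Q$ and every graph in this family, any maximal independent set of $G[Q]$ is either $Q\cap(V\sm U)$ or $(Q\cap(V\sm U))\cup\set{u}$ for a single $u\in Q\cap U$, because $U$ is a clique. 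Hence each query has at most $\Delta+1$ possible answers by construction (even adaptively, and even though the algorithm may know $U$ exactly), a deterministic strategy reconstructs at most $(\Delta+1)^t$ graphs, and the bound follows. If you want to salvage your write-up, replacing the disjoint-cliques distribution with this single-clique family turns your sketch into a complete proof; the rest of your argument (Yao reduction, counting leaves of the decision tree) is sound.
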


For non-adaptive algorithms, we further improve this lower bound by a factor of $\log \Delta$. This matches \cref{thm:randnonadupperbound} for $\Delta \le n^{1-\eps}$ when $\eps > 0$ is fixed.

\begin{theorem}\label{thm:randnonadlowerbound}
    The number of queries of a randomised non-adaptive algorithm that reconstructs any graph of maximum degree $\Delta$ with probability at least $1/2$ is at least
    \[
        \Omega\l(\Delta^2 \log\l(\frac{n}{\Delta}\r)\r).
    \]
\end{theorem}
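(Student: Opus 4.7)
The plan is to apply Yao's minimax principle: to obtain the randomised non-adaptive lower bound, it suffices to exhibit a distribution $\mathcal{D}$ over $n$-vertex graphs of maximum degree $\Delta$ such that every deterministic non-adaptive algorithm using $o(\Delta^2 \log(n/\Delta))$ queries fails on $\mathcal{D}$ with probability greater than $1/2$.

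For $\mathcal{D}$, I would partition $V$ into $t = \lfloor n/(\Delta+1)\rfloor$ disjoint blocks $V_1,\dots,V_t$ of size $\Delta+1$ each, and independently in each block plant the graph $K_{\Delta+1}$ with one uniformly random edge $e_i \in \binom{V_i}{2}$ deleted. This preserves maximum degree $\Delta$, and the hidden information is the vector $(e_1,\dots,e_t)$, where each $e_i$ takes one of $\binom{\Delta+1}{2} = \Theta(\Delta^2)$ values. The first technical step is a per-block distinguishing lemma: for distinct $e,e'\in\binom{V_i}{2}$ and any query $Q$, the maximal independent sets of $K_{V_i}[Q\cap V_i]-e$ and $K_{V_i}[Q\cap V_i]-e'$ share a common element unless $Q\cap V_i\in\{e,e',e\cup e'\}$. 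This is a short case analysis on $W = Q\cap V_i$: for any other restriction, a singleton $\{w\}$ with $w\in W\setminus(e\cup e')$ is a valid MIS of both induced subgraphs, and the adversary can return it. Thus each query carries distinguishing power for only a bounded number of alternatives per block.

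The second step is to aggregate across the $t$ independent blocks. Since the $e_i$ are independent, the algorithm's success probability factorises as $\prod_i p_i$, where $p_i$ is the conditional probability of correctly identifying $e_i$. For $\prod_i p_i\ge 1/2$ we need $p_i \ge (1/2)^{1/t} = 1-\Theta(1/t)$ on average, so in each block almost every alternative $e_i' \ne e_i$ must be resolved by some query $Q_j$ with $Q_j\cap V_i\in\{e_i,e_i',e_i\cup e_i'\}$. Combined with the observation that each query of size $s$ can yield a "useful" (size-$2$, $3$, or $4$) restriction in at most $O(s)$ blocks, a counting/coupon-collector argument over the $t$ independent blocks, each requiring resolution probability $\Omega(1/\Delta^2)$ per query, yields $q = \Omega(\Delta^2 \log t) = \Omega(\Delta^2 \log(n/\Delta))$.

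The main obstacle is the aggregation. A naive convexity argument (AM--GM on $\prod p_i$ combined with the weight bound $\sum_j |Q_j|\le qn$) only produces $\Omega(\Delta)$ without the logarithmic factor, because the algorithm can spread a single query's useful restrictions across many blocks. To extract the $\log(n/\Delta)$ factor, one must bound more delicately the probability that a fixed query resolves a \emph{random} $e_i$ against \emph{all} alternatives simultaneously, not merely against a fixed pair; this is where the near-independence of the blocks, together with the fact that a single query resolves an average $e_i$ with probability $O(1/\Delta^2)$ regardless of how queries are distributed, drives the product $\prod_i p_i$ down geometrically in $t$ and forces the coupon-collector overhead $\log(n/\Delta)$.
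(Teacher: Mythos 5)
There is a genuine gap, and it is fatal to the chosen construction rather than just to the write-up. Your hard distribution --- disjoint blocks of size $\Delta+1$, each a $K_{\Delta+1}$ with one uniformly random edge deleted --- can be reconstructed \emph{deterministically and non-adaptively} with only $\binom{\Delta+1}{2} = \cO(\Delta^2)$ queries: index the vertices of each block by $[\Delta+1]$, and for each pair $\set{u,v}$ query the set consisting of the $u$-th and $v$-th vertex of every block. The induced subgraph is a disjoint union of $2$-vertex graphs, so every maximal independent set contains both vertices of block $i$ if and only if $e_i = \set{u_i,v_i}$, and after all $\binom{\Delta+1}{2}$ queries every $e_i$ is determined with certainty, whatever the oracle does. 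Consequently no argument can extract more than $\Omega(\Delta^2)$ from this distribution, and the $\log(n/\Delta)$ factor you hope to recover in your final paragraph is provably unobtainable; the ``more delicate'' aggregation you defer cannot exist. The underlying reason is visible already at the level of entropy: your family has $\log(\#\text{graphs}) = \Theta((n/\Delta)\log\Delta)$, which is asymptotically \emph{smaller} than the target $\Delta^2\log(n/\Delta)$ once $\Delta \gg n^{1/3}$, and your own coupon-collector heuristic fails because a single query can simultaneously probe a size-$2$ window in every block, so the blocks never compete for queries and the union bound over the $q$ candidate windows per block caps the per-block success probability at $\cO(q/\Delta^2)$ --- forcing only $q = \Omega(\Delta^2)$. (Your per-block distinguishing lemma and the Yao-style reduction are fine; they mirror Lemma 4.1 of the paper.)

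The paper's construction is designed precisely to avoid both problems. It plants a single clique $U$ of size $\Theta(\Delta)$ whose vertices each choose $\Theta(\Delta)$ neighbours among the $n-\Theta(\Delta)$ remaining vertices (which form an independent set), so the family has $\log N = \Theta(\Delta^2\log(n/\Delta))$ --- matching the target bound --- and every query has at most $\abs{Q\cap U}+1$ possible answers. The clique alone gives the adaptive bound with a $1/\log\Delta$ loss; to remove that loss in the non-adaptive setting, the paper additionally plants a random set $W$ of size $\Theta(\Delta)$ joined completely to $U$, so that any query large enough to hit $U$ often is very likely to hit $W$ and then admits only one answer. Averaging over the random choice of $U$ and $W$ shows $\ex(\log D_Q) = \cO(1)$ per query, and a transcript-counting argument finishes the proof. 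If you want to salvage your approach, you need a distribution whose entropy is genuinely $\Theta(\Delta^2\log(n/\Delta))$ and for which each query leaks only $\cO(1)$ bits; independent small blocks cannot supply this.
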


The proofs of these lower bounds rely on graphs that contain a clique of size $\Theta(\Delta)$ while the remaining vertices form an independent set. We will show that maximal independent set queries reveal only very little information about such graphs. Counting the number of these graphs then gives the desired lower bounds.

\textbf{Deterministic algorithms.} To obtain a deterministic algorithm, Konrad, O'Sullivan, and Traistaru \cite{KOT24} adapted the idea of their randomised algorithm. They showed that a fixed collection of sufficiently many random queries will reconstruct every graph correctly. This gives a deterministic algorithm with $\cO(\Delta^3 \log n)$ non-adaptive queries.\footnote{In fact, their arguments show that $\cO(\Delta^3 \log(n / \Delta))$ queries are sufficient.}

\begin{theorem}[Konrad, O'Sullivan, and Traistaru]\label{thm:detnonadupperbound}
    There is a deterministic non-adaptive algorithm that uses $\cO(\Delta^3 \log n)$ queries to reconstruct any graph of maximum degree $\Delta$.
\end{theorem}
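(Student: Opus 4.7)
The natural plan is the probabilistic method: construct a deterministic collection of queries by showing that a uniformly random family of $k = \cO(\Delta^3 \log n)$ queries has the property needed for reconstruction with positive probability. I would sample each query $Q$ by including every vertex of $V$ independently with probability $p$ of order $1/\Delta$.

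Before the calculation I would pin down the reconstruction rule. Given the fixed family $\cQ$ and the returned maximal independent sets $\set{I_Q : Q \in \cQ}$, declare a pair $\set{u,v}$ to be a non-edge if and only if some $Q \in \cQ$ satisfies $u, v \in I_Q$; otherwise declare it an edge. Since any independent set certifies non-edges, the ``no edge'' conclusion is always sound, so the only question is whether \emph{every} non-edge of $G$ is witnessed in this way, and this has to hold for all graphs $G$ of maximum degree $\Delta$ simultaneously.

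The combinatorial condition I would impose on $\cQ$ is then: for every pair $u, v \in V$ and every $S \subs V \sm \set{u,v}$ with $\abs{S} \le 2\Delta$, there is a $Q \in \cQ$ with $\set{u,v} \subs Q$ and $Q \cap S = \emptyset$. If $\cQ$ satisfies this, then for any $G$ with maximum degree at most $\Delta$ and any non-edge $uv$, taking $S = (N(u) \cup N(v)) \sm \set{u,v}$ produces a query $Q$ in which neither $u$ nor $v$ has any neighbour, so every maximal independent set of $G[Q]$ contains both $u$ and $v$ and certifies the non-edge.

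The probabilistic calculation is then routine. A single random query covers a fixed triple $(u, v, S)$ with probability $p^2 (1-p)^{\abs{S}}$, which is $\Omega(1/\Delta^2)$ when $p = 1/(2\Delta)$ and $\abs{S} \le 2\Delta$. The number of triples to cover is at most $\binom{n}{2} \binom{n-2}{2\Delta}$, whose logarithm is $\cO(\Delta \log (n/\Delta))$. A union bound then shows that with positive probability $k$ independent random queries suffice provided $k/\Delta^2 \gtrsim \Delta \log(n/\Delta)$, i.e.\ $k = \cO(\Delta^3 \log n)$. The only real subtlety is the choice of combinatorial condition on $\cQ$: because a single family must work for every graph at once, the union bound must range over all avoidance sets $S$ rather than just over pairs, and the extra factor of $\Delta$ this introduces in the log term is precisely what separates the deterministic bound from the $\cO(\Delta^2 \log n)$ randomised bound of \cref{thm:randnonadupperbound}.
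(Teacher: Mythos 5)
Your proposal is correct and is essentially the argument the paper attributes to Konrad, O'Sullivan, and Traistaru: a fixed family of independently sampled random queries with inclusion probability $\Theta(1/\Delta)$ succeeds for all graphs of maximum degree $\Delta$ simultaneously, by a union bound over pairs together with their possible neighbourhoods. Note that your combinatorial condition on $\cQ$ is precisely the statement that $\cQ^*$ is $(2,2\Delta)$-cover-free (cf.\ \cref{lem:coverfreereconstrat}), so your calculation is also the standard probabilistic upper bound on $t(n,2,2\Delta)$ that the paper mentions as an alternative route to this theorem.
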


Konrad, O'Sullivan, and Traistaru also showed that every deterministic non-adaptive algorithm needs at least $\Omega(\Delta^3 / (\log \Delta)^2 + \log n)$ queries. We provide a lower bound that matches the upper bound from \cref{thm:detnonadupperbound} up to a factor of $\log \Delta$.

\begin{theorem}\label{thm:detnonadlowerbound}
    The number of queries of a deterministic non-adaptive algorithm that reconstructs any graph of maximum degree $\Delta$ is at least
    \[
        \Omega\l(\min\set*{n^2, \frac{\Delta^3 \log n}{\log \Delta}}\r).
    \]
\end{theorem}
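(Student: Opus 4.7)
The plan is to exhibit a hard family of graphs of maximum degree $\Delta$ and reduce the reconstruction task to a lower bound on cover-free families.

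I would take the clique instances $\{G_S : S \in \binom{[n]}{\Delta+1}\}$ already used for the randomised lower bounds, where $G_S$ is the clique on $S$ with all other vertices isolated, so every $G_S$ has maximum degree $\Delta$ and there are $N = \binom{n}{\Delta+1}$ of them. On any query $Q$, the subgraph $G_S[Q]$ is a clique on $Q \cap S$ plus isolated vertices, so its maximal independent sets are exactly $Q$ itself when $|Q \cap S| \le 1$, and $\{s\} \cup (Q \setminus S)$ for some adversary-chosen $s \in Q \cap S$ when $|Q \cap S| \ge 2$. A direct case analysis then shows that two distinct $S, T$ admit a common MIS on $Q$ if and only if either both $|Q \cap S|, |Q \cap T| \le 1$, or $|Q \cap S|, |Q \cap T| \ge 2$ and $|Q \cap (S \setminus T)| = |Q \cap (T \setminus S)| \le 1$; the contrapositive pins down the precise condition under which $Q$ forces the algorithm to tell $G_S$ and $G_T$ apart.

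The key reduction step is to translate this distinguishability condition into a cover-free family property of the queries. For each $S$ I would define a set $C_S \subseteq [t]$ of ``witnessing'' query indices, tuned to the common-MIS analysis above, and argue that if the algorithm correctly reconstructs every $G_S$, then $\{C_S\}$ must form a $\Delta$-cover-free family on $[t]$. The adversarial intuition is that if $C_{S_0} \subseteq C_{S_1} \cup \cdots \cup C_{S_\Delta}$, then the $\Delta$ hypotheses $S_1, \ldots, S_\Delta$ collectively shadow $S_0$ on every query, and by exploiting the $|Q \cap S|$-fold freedom in the adversary's MIS choice --- which matches the $\Delta$-fold cover in the CFF property --- one can construct a response tuple simultaneously consistent with $G_{S_0}$ and with some $G_{S_{j^*}}$, contradicting correctness.

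With this reduction in hand, the paper's improved lower bound $t = \Omega(d^2 \log N / \log d)$ for $d$-cover-free families, applied with $d = \Delta$ and $N = \binom{n}{\Delta+1}$ (so $\log N = \Theta(\Delta \log(n/\Delta))$), yields $t = \Omega(\Delta^3 \log(n/\Delta)/\log \Delta)$, which is $\Omega(\Delta^3 \log n / \log \Delta)$ whenever $\Delta \le n^{1-\eps}$. For $\Delta$ close to $n$ the minimum in the theorem is instead $n^2$, and this term can be recovered from \cref{thm:randnonadlowerbound} (whose $\Omega(\Delta^2 \log(n/\Delta))$ bound already becomes $\Omega(n^2)$ for $\Delta = \Theta(n)$) or, alternatively, from a direct counting argument on the number of distinct response tuples.

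The main obstacle is the cover-free family reduction. A clean formulation of $C_S$ must make the $\Delta$-cover-free property tight: too weak an encoding only yields the weaker $\Omega(\Delta^2 \log n / \log \Delta)$ bound that standard group-testing-style reasoning gives for the single-defective identification of $S$, while too strong an encoding fails to hold in general. Matching the $\Delta$-wise adversarial MIS freedom to the $\Delta$ of the cover-free parameter, and then proving the improved CFF lower bound itself without losing log factors, are the two places the real work of the proof sits.
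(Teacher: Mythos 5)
Your approach has a genuine gap, and in fact the hard family you chose cannot work. The family of clique graphs $\set{G_S : S \in \binom{[n]}{\Delta+1}}$ is reconstructible with only $\cO(\Delta^2 \log n)$ deterministic non-adaptive queries: take $t = C\Delta^2\log n$ random queries, each including every vertex independently with probability $2/\Delta$. Using your own common-MIS criterion, one checks that every pair $S \neq T$ is distinguished by such a random query with probability $\Omega(1/\Delta)$ --- for the hardest pairs $S = A\cup\set{u}$, $T = A\cup\set{v}$, the event $\set{u \in Q,\ v\notin Q,\ \abs{Q\cap A}\ge 2}$ already has probability $\Theta(1/\Delta)$ and forces $\abs{Q\cap(S\sm T)} \ne \abs{Q\cap(T\sm S)}$ --- so a union bound over the at most $n^{2\Delta+2}$ pairs succeeds. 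Consequently no encoding $C_S$ can force a $(1,\Delta)$-cover-free structure yielding $\Omega(\Delta^3\log(n/\Delta)/\log\Delta)$, since that would contradict this upper bound. The reduction step is also logically flawed on its own terms: a violation $C_{S_0}\subs\bigcup_{j=1}^{\Delta} C_{S_j}$ only supplies, for each query $Q$, some index $j(Q)$ that ``shadows'' $S_0$ on that query, with $j(Q)$ varying from query to query. To defeat a deterministic non-adaptive algorithm you need a \emph{single} pair of graphs sharing a common maximal independent set on \emph{every} query; an adversary that answers consistently with $G_{S_0}$ and with a different $G_{S_{j(Q)}}$ on each $Q$ does not prevent the algorithm from ruling out all of $S_1,\dots,S_\Delta$ and correctly outputting $S_0$. (This is precisely where MIS queries differ from group testing, where the cover-free condition genuinely captures non-adaptive identifiability.)

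The paper's hard instances are different in a way that is essential for the exponent $3$: the indistinguishable pairs are $G$ and $G\cup\set{uv}$, where $u$ and $v$ each already have $\Delta-1$ neighbours forming a set $W$ with $\abs{W}\le 2\Delta-2$. Detecting the edge $uv$ forces some query to contain \emph{both} $u$ and $v$ while avoiding all of $W$, which is exactly the statement that the dual family $\cQ^* = \set{\cQ_v : v\in V}$ (indexed by \emph{vertices}, not by candidate graphs) is $(2,2\Delta-2)$-cover-free (\cref{lem:reconstratcoverfree,lem:detnonadboundscoverfree}). The parameter $w=2$ --- requiring an \emph{intersection} of two dual sets to escape the union --- is what buys the third factor of $\Delta$, and the remaining work is the new lower bound $t(n,2,r) = \Omega(\min\set{n^2, r^{3}\log n/\log r})$ of \cref{thm:tnwr}, not the classical $w=1$ bound of \cref{thm:tn1r} that your sketch invokes. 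Separately, your fallback for the $n^2$ term also leaves a hole: the minimum in the statement is already $n^2$ once $\Delta\gtrsim n^{2/3}$, where \cref{thm:randnonadlowerbound} only gives $\Omega(\Delta^2\log(n/\Delta)) = o(n^2)$.
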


Our proof of \cref{thm:detnonadlowerbound} relates deterministic non-adaptive algorithms to cover-free families. Here, a family of sets $\cF$ is \defn{$(w,r)$-cover-free} if for all $A_1, \dots, A_w \in \cF$ and all $B_1, \dots, B_r \in \cF \sm \set{A_1, \dots, A_w}$ we have
\[
    \bigcap_{i=1}^w A_i \nsubs \bigcup_{i=1}^r B_i.
\]
The size of cover-free families has been widely studied because of their relevance to non-adaptive group testing \cite{HS87,DH00,DH06,AJS19}, key distribution patterns \cite{MP88,DFFT95}, superimposed codes \cite{KS64,DR82}, and more \cite{W06,E96}. For a given $n$, let $t(n,w,r)$ denote the minimal $t$ such that there exists a $(w,r)$-cover-free family $\cF \subs \cP(t)$ with $n$ sets. We observe the following connection between cover-free families and deterministic non-adaptive algorithms.

\begin{lemma}\label{lem:detnonadboundscoverfree}
    The minimum number of queries of a deterministic non-adaptive algorithm that reconstructs any graph of maximum degree $\Delta$ is at least $t(n,2,2\Delta-2)$ and at most $t(n,2,2\Delta)$.
\end{lemma}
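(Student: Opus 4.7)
For each vertex $v$ let $S_v = \{i \in [t] : v \in Q_i\}$, giving a family $\cF = \{S_v : v \in V\}$ of $n$ subsets of $[t]$. Unpacked vertex-wise, the $(2,r)$-cover-free property of $\cF$ says exactly that \emph{for every pair of distinct $u, v \in V$ and every $T \subseteq V \setminus \{u, v\}$ with $|T| \le r$, there is a query $Q_i$ containing both $u$ and $v$ but disjoint from $T$}. With this reformulation, the lemma reduces to showing (i) that a $(2, 2\Delta)$-cover-free $\cF$ yields a working algorithm, and (ii) that the query family of any working algorithm is $(2, 2\Delta - 2)$-cover-free.

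For the upper bound, the plan is to take two distinct graphs $G, G'$ of maximum degree at most $\Delta$, pick any $uv \in E(G) \bigtriangleup E(G')$, and assume without loss of generality that $uv \in E(G) \setminus E(G')$. Setting $T = N_{G'}(u) \cup N_{G'}(v)$ gives $|T| \le 2\Delta$ and $T \cap \{u, v\} = \emptyset$ (because $uv \notin E(G')$), so the cover-free hypothesis supplies a query $Q_i$ with $u, v \in Q_i$ avoiding $T$. In $G'|_{Q_i}$ both $u$ and $v$ are then isolated, so every maximal independent set must contain both; in $G|_{Q_i}$ the edge $uv$ precludes this. The two sets of possible MIS responses for $Q_i$ are therefore disjoint, and the algorithm distinguishes $G$ from $G'$ regardless of the adversary's choices.

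For the lower bound I would argue the contrapositive. If $\cF$ is not $(2, 2\Delta - 2)$-cover-free, there exist distinct $u, v$ and $T \subseteq V \setminus \{u, v\}$ with $|T| \le 2\Delta - 2$ such that every query containing both $u, v$ meets $T$. I would partition $T = T_u \sqcup T_v$ with $|T_u|, |T_v| \le \Delta - 1$, take $G$ to have edges $\{uv\} \cup \{ut : t \in T_u\} \cup \{vt : t \in T_v\}$, and set $G' = G - uv$; both have maximum degree at most $\Delta$. For queries $Q_i$ not containing both $u, v$ the two graphs induce the same subgraph on $Q_i$, so any MIS is a common response. Otherwise, pick some $t \in T \cap Q_i$ and, if $t \in T_v$, take $M_i = \{u\} \cup (T_v \cap Q_i) \cup (Q_i \setminus (T \cup \{u, v\}))$ (and symmetrically if $t \in T_u$). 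A routine check---$u$ is non-adjacent to $T_v$, the only $G$-neighbour of each $T_u$-vertex in $Q_i$ is $u$, and the chosen vertex in $T_v \cap Q_i$ dominates $v$ in $G'|_{Q_i}$---shows $M_i$ is a common MIS of $G|_{Q_i}$ and $G'|_{Q_i}$, so no algorithm can separate $G$ from $G'$.

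The main subtle point, and the source of the two-unit gap between $2\Delta$ and $2\Delta - 2$, is the choice of the set $T$ in each direction. In the upper bound, using $N_{G'}$ rather than $N_G$ is crucial: it guarantees $\{u, v\} \cap T = \emptyset$ automatically, bounds $|T|$ by $2\Delta$, and forces both endpoints into every MIS of $G'|_{Q_i}$. In the lower bound, the partition requirement $|T_u|, |T_v| \le \Delta - 1$ is exactly what the max-degree constraint permits once the edge $uv$ already contributes one to each degree, which is why the construction can only accommodate up to $2\Delta - 2$ obstructing vertices.
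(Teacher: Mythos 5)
Your proof is correct and follows essentially the same route as the paper: the same dual translation between queries and vertex-incidence sets, the same graph pair $G$ and $G - uv$ with the obstructing set split between the neighbourhoods of $u$ and $v$ for the lower bound, and the same ``isolate both endpoints'' argument for the upper bound (which the paper delegates to Konrad, O'Sullivan, and Traistaru). The only point worth flagging is that your vertex-wise reformulation of $(2,r)$-cover-freeness tacitly assumes the incidence sets $S_v$ are pairwise distinct --- a harmless technicality, since this holds automatically in both directions, which the paper glosses over as well.
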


Upper and lower bounds on $t(n,w,r)$ have been extensively studied. On the side of upper bounds, it is well-known \cite{DFFT95,E96,STW00,SW04} that
\[
    t(n,w,r) = \cO\l(\frac{(w+r)^{w+r+1}}{w^w r^r} \log n\r).
\]
If $w$ is fixed, this yields $t(n,w,r) = \cO(r^{w+1} \log n)$ which gives an alternative proof of \cref{thm:detnonadupperbound}.

In terms of lower bounds, when $w = 1$, D'yachkov and Rykov \cite{DR82} proved that $t(n,1,r) = \Omega(r^2 \log n / \log r)$ if $r$ is fixed and $n \to \infty$. Later, Ruszink\'{o} \cite{R94} and F\"{u}redi \cite{F96} gave purely combinatorial proofs of this result. More generally, for arbitrary $r$ and $n$, the argument of F\"{u}redi gives the following lower bound.

\begin{theorem}[F\"{u}redi]\label{thm:tn1r}
    If $1 \le r < n$, then
    \[
        t(n,1,r) = \Omega\l(\min\set*{n, \frac{r^2 \log n}{\log r}}\r).
    \]
\end{theorem}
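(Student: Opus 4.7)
The plan is to follow the combinatorial approach of F\"{u}redi via a covering inequality for each set of the family. Let $\cF = \{A_1, \ldots, A_n\} \subs \cP([t])$ be a $(1,r)$-cover-free family. If $t \ge n$ the bound holds trivially, so I assume $t < n$ and aim to show $t = \Omega(r^2 \log n / \log r)$.

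The first step is to derive the following inequality for every $i$. Writing $d(x) := |\{j : x \in A_j\}|$, the cover-free property asserts that for every $r$-subset $S \subs [n] \sm \{i\}$ there is a witness $x \in A_i \sm \bigcup_{j \in S} A_j$. Since a fixed $x \in A_i$ witnesses exactly $\binom{n-d(x)}{r}$ such subsets, covering all $\binom{n-1}{r}$ subsets of $[n] \sm \{i\}$ forces
\[
    \sum_{x \in A_i} \binom{n - d(x)}{r} \ge \binom{n-1}{r}.
\]

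The remaining task is to turn these $n$ per-set inequalities into a global lower bound on $t$. I would first reduce to a uniform family of some common size $k$ via pigeonhole on set sizes, losing at most a factor $t+1$ in $n$, which is absorbed in the logarithmic target. In this uniform-$k$ family, the approximation $\binom{n-d}{r}/\binom{n-1}{r} \approx (1-d/n)^r$ turns the covering inequality into a constraint on the degree distribution inside each $A_i$. Combined with the elementary counting bound $n \le \binom{t}{k}$, i.e., $\log n \le k \log(et/k)$, an appropriate optimisation over $k$ should yield the claimed lower bound.

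The main obstacle is the extraction of the correct exponent. A naive averaging of the covering inequality over $i$ yields only $t = \Omega(r)$, and even a Jensen-type analysis combined with the counting bound seems to lose a factor of $r$ or $\log r$. The full $r^2/\log r$ factor arises by exploiting the sharp peak of $d\binom{n-d}{r}$ at $d \approx n/(r+1)$: satisfying the covering inequality in the presence of this strongly peaked function forces the degrees in each $A_i$ to concentrate near the peak, which in turn severely restricts the joint structure of the $n$ sets. Combining this concentration estimate with the counting bound, and parametrising $k$ on the order of $r \log n / \log r$, is the technical heart of the argument; this is where F\"{u}redi's proof does its real work, and adapting the parameters from his fixed-$r$ setting to the arbitrary-$r$ regime is the delicate step.
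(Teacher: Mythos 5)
First, a framing point: the paper does not prove \cref{thm:tn1r} at all --- it is quoted from F\"{u}redi \cite{F96} --- so the comparison here is against his argument rather than anything in the text.

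Your proposal has a genuine gap, and you essentially say so yourself: everything after the covering inequality is a plan, and the step that would produce the factor $r^2/\log r$ --- which is the entire content of the theorem --- is deferred to ``the technical heart of the argument.'' The inequality $\sum_{x \in A_i} \binom{n-d(x)}{r} \ge \binom{n-1}{r}$ is correctly derived, but it is the wrong starting point. It is satisfied with enormous slack by any family in which each $A_i$ contains a single element of degree $1$, so the $n$ per-set inequalities impose no structure on their own; they must be combined with a quantitative use of $t < n$, and nothing in your sketch does that. Your own averaging computation shows the inequality yields only $t = \Omega(r)$, and the proposed repair does not hold up: the function $d\binom{n-d}{r}$ peaking at $d \approx n/(r+1)$ does not \emph{force} degrees to concentrate near the peak --- the inequality is easiest to satisfy when degrees are \emph{small}, not when they are near $n/(r+1)$ --- so the claimed ``concentration estimate'' has no source, and the subsequent uniformization and optimisation over $k$ have nothing to work with. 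As written, the argument plateaus at bounds linear in $r$.

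F\"{u}redi's proof goes by a different mechanism that supplies exactly the missing structural input. Call $B \subs F$ an \emph{own} subset of $F \in \cF$ if $B \nsubs F'$ for every $F' \in \cF$ with $F' \neq F$. His key lemma is that all but at most $r$ members of $\cF$ contain an own subset of size at most $d := \ceil*{t/\binom{r+1}{2}}$; since own subsets of distinct members are distinct, this gives $n \le r + \sum_{j \le d} \binom{t}{j}$, and taking logarithms yields $t = \Omega(r^2 \log n / \log r)$, with the degenerate case $d \le 1$ producing the $\min\set*{n, \cdot}$ term. If you want to complete a proof along combinatorial lines, it is that lemma you need to prove, not a sharper analysis of the per-set covering inequalities.
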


This shows that $t(n,1,r) = \Omega(r^2 \log n / \log r)$ for $r = \cO(\sqrt{n})$, matching the upper bound up to a factor of $\log r$, and $t(n,1,r) = \Theta(n)$ for $r = \Omega(\sqrt{n})$.

For $w \ge 2$, previous lower bounds were more complicated. Stinson, Wei, and Zhu \cite{SWZ00} showed that $t(n,w,r) = \Omega(r^{w+1} \log n / \log r)$ if $r$ is fixed and $n \to \infty$. It was only later determined by Ma and Wei \cite{MW04} that this result holds for $r = \cO(\sqrt{n})$. For $r = \Omega(\sqrt{n})$, Abdi and Bshouty \cite{AB16} proved the best lower bounds known. They showed that $t(n,w,r) = \Omega(r^{w+1} \log n / \log^{k+1} r)$ for $\Omega(n^{(k-1)/k}) \le r \le \cO(n^{k/(k+1)})$ where $2 \le k \le w$, and $t(n,w,r) = \Theta(n^w)$ for $r = \Omega((n \log n)^{w/(w+1)})$.

Using probabilistic arguments, we generalise \cref{thm:tn1r} to $w \ge 2$. This improves the previous lower bounds for $r = \Omega(\sqrt{n})$.

\begin{theorem}\label{thm:tnwr}
    If $w \ge 1$ is fixed and $1 \le r < n$, then
    \[
        t(n,w,r) = \Omega\l(\min\set*{n^w, \frac{r^{w+1} \log n}{\log r}}\r).
    \]
\end{theorem}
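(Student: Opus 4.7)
Let $\cF = \set{A_1, \dots, A_n}$ be a $(w,r)$-cover-free family on ground set $[t]$, and write $d_x := \abs{\set{i \st x \in A_i}}$. Since $(w,r)$-cover-freeness implies $(w,r')$-cover-freeness for every $r' \le r$ with $r' + w \le n$ (extend any $r'$-subset to an $r$-subset and apply the hypothesis), we may assume $r \le n^{w/(w+1)}$ by replacing $r$ with $\min\set{r, \floor{n^{w/(w+1)}}}$. Under this assumption the bound we must prove reduces to $t = \Omega(r^{w+1} \log n / \log r)$. The plan is to adapt F\"uredi's argument behind \cref{thm:tn1r}: reduce $(w,r)$- to $(1,r)$-cover-freeness on $(w-1)$-wise intersections, then combine the resulting estimate with a double-counting bound that controls the degree sequence.

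For any $(w-1)$-subset $W \subs [n]$, set $I_W := \bigcap_{j \in W} A_j$. I claim $\cF_W := \set{A_i \cap I_W \st i \notin W}$ is a $(1,r)$-cover-free family of $n - w + 1$ sets on the ground set $I_W$. Indeed, for any $i_0 \notin W$ and any $r$-subset $R \subs [n] \sm (W \cup \set{i_0})$, the $(w,r)$-cover-free property applied to the $w$-tuple $W \cup \set{i_0}$ and to $R$ produces an element of $I_W \cap A_{i_0} \sm \bigcup_{j \in R} A_j$, which is precisely a $(1,r)$-cover-free witness for $A_{i_0} \cap I_W$ inside $\cF_W$. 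Applying \cref{thm:tn1r} therefore yields
\[
    \abs{I_W} = \Omega\l(\frac{r^2 \log n}{\log r}\r),
\]
where the assumption $r \le n^{w/(w+1)}$ selects the right branch of the minimum. Summing over all $\binom{n}{w-1}$ choices of $W$ and using the identity $\sum_W \abs{I_W} = \sum_x \binom{d_x}{w-1}$ gives
\[
    \sum_x \binom{d_x}{w-1} \ge \binom{n}{w-1} \cdot \Omega\l(\frac{r^2 \log n}{\log r}\r).
\]

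To convert this into a lower bound on $t$, I combine it with a degree-control bound from direct double-counting: for each pair of disjoint $U, R \subs [n]$ with $\abs{U} = w$ and $\abs{R} = r$, pick a witness $x(U,R) \in \bigcap_{i \in U} A_i \sm \bigcup_{j \in R} A_j$; counting the pairs $(U,R)$ via their witnesses gives
\[
    \sum_x \binom{d_x}{w}\binom{n-d_x}{r} \ge \binom{n}{w}\binom{n-w}{r}.
\]
Since the function $d \mapsto \binom{d}{w}\binom{n-d}{r}$ peaks near $d = wn/(w+r) = \cO(n/r)$ for fixed $w$, this forces the bulk of elements to have $d_x = \cO(n/r)$. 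Substituting this bound into $\sum_x \binom{d_x}{w-1}$ gives $\sum_x \binom{d_x}{w-1} = \cO\!\l(t (n/r)^{w-1}\r)$, and comparing with the averaging inequality above yields $t = \Omega(r^{w+1} \log n / \log r)$, as required.

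The main obstacle is the degree-control step. The weighted inequality $\sum_x \binom{d_x}{w}\binom{n-d_x}{r} \ge \binom{n}{w}\binom{n-w}{r}$ bounds only a weighted sum, not the individual degrees, so I will need to partition $[t]$ by degree and bound the high-degree contribution separately---either by showing that elements with $d_x \gg n/r$ contribute negligibly to $\sum_x \binom{d_x}{w-1}$ (using the $(w,r)$-cover-free condition to cap their number), or by first removing or truncating such elements and then re-applying the reduction to a sub-family of $\cF$. Once that is done, bookkeeping the constants across the regime reduction $r \mapsto \min\set{r, \floor{n^{w/(w+1)}}}$ recovers the stated $\min$ form of the bound.
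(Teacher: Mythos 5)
Your reduction to $(1,r)$-cover-free families on the intersections $I_W$ is sound, and the overall plan (averaging over $(w-1)$-subsets plus degree control) is essentially the Ma--Wei route rather than the paper's, but it has two genuine problems. First, the branch selection is wrong: for $w \ge 2$ and $\sqrt{n} < r \le n^{w/(w+1)}$ (a nonempty range), \cref{thm:tn1r} applied to $\cF_W$ only gives $\abs{I_W} = \Omega(n)$, since there $r^2 \log n / \log r > n$ and the minimum takes the $n$ branch; this is tight, because $t(m,1,r) \le m$ always (distinct singletons), so the $(1,r)$-cover-free structure of $\cF_W$ cannot force $\abs{I_W}$ beyond $\Omega(n)$. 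Feeding $\abs{I_W} = \Omega(n)$ into your averaging yields at best $t = \Omega(n r^{w-1})$, which for $r \ge \sqrt{n}$ is weaker than the required $\Omega(r^{w+1})$. Your preliminary replacement of $r$ by $\min\set{r, \floor{n^{w/(w+1)}}}$ does not help, since the failure occurs for $r$ inside that range. This regime $r = \Omega(\sqrt{n})$ is exactly the new content of the theorem (the case $r = \cO(\sqrt{n})$ was already known from Ma--Wei), and the paper handles it by a separate argument (\cref{lem:tnwrlowerbound}): sample $A_1, \dots, A_w$ and $B_1, \dots, B_r$ at random, note that the uncovered set $X = \set{x \st \cA \subs \cF_x, \cB \subs \cF_x^c}$ has expected size at most $(w/(w+r))^w t$, and cover each surviving $x \in X$ by one extra set $C_x \in \cF_x \sm \cA$, which exists unless $\cA = \cF_x$ --- an event of probability at most $t (w/n)^w$, whence the $n^w$ term in the minimum. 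Nothing in your outline substitutes for this step.

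Second, the degree-control step you flag as the main obstacle is a real gap, not bookkeeping. The inequality $\sum_x \binom{d_x}{w}\binom{n-d_x}{r} \ge \binom{n}{w}\binom{n-w}{r}$ constrains a weighted sum but places no cap on individual degrees: a $(w,r)$-cover-free family may contain elements with $d_x = n$ (an element lying in every set violates nothing), and $m$ such elements contribute $m\binom{n}{w-1}$ to $\sum_x \binom{d_x}{w-1}$, so only $\cO(r^2 \log n / \log r)$ of them already account for your entire lower bound $\binom{n}{w-1} \cdot \Omega(r^2 \log n / \log r)$. Without the truncation or removal argument you defer, the proof therefore establishes nothing beyond the $w = 1$ bound. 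This can be repaired along Ma--Wei lines for $r = \cO(\sqrt{n})$, but it is the technical heart of that approach; the paper's probabilistic reduction (\cref{lem:reducetnwrtotn1r}) avoids it entirely, because sampling $\cA$ and $\cB$ at random gives $\ex(\abs{X}) \le (w/(w+r))^w t$ uniformly over all degree sequences, after which the $(1,s)$-cover-free property is applied to $\cF$ restricted to $X$.
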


This result implies that $t(n,w,r) = \Omega(r^{w+1} \log n / \log r)$ for $r = \cO(n^{w/(w+1)})$ and $t(n,w,r) = \Theta(n^w)$ for $r = \Omega(n^{w/(w+1)})$. \cref{thm:detnonadlowerbound} is an immediate consequence of this result in combination with \cref{lem:detnonadboundscoverfree}.

The rest of the paper is organised as follows. In \cref{sec:coverfree} we prove \cref{thm:tnwr}, our lower bound for cover-free families. We use this result in \cref{sec:detalg} to obtain a lower bound on the number of queries of deterministic algorithms, proving \cref{thm:detnonadlowerbound,lem:detnonadboundscoverfree}. In \cref{sec:randalg}, we then consider randomised algorithms and prove \cref{thm:randadlowerbound,thm:randnonadlowerbound}. We finish with some open problems in \cref{sec:openproblems}.

\textbf{Notation.} We denote the dual of a family of sets $\cF \subs \cP(X)$ by $\cF^* = \set{\cF_x : x \in X}$ where $\cF_x = \set{F \in \cF : x \in F}$. Moreover, if $G = (V, E)$ is a graph and $X \subs V$, then $G[X]$ is the subgraph of $G$ induced by $X$, and $N_G(X)$ is the neighbourhood of $X$ in $G$.

\section{Cover-free families}
\label{sec:coverfree}

In this section we prove \cref{thm:tnwr} and show that for any fixed $w \ge 1$,
\[
    t(n,w,r) = \Omega\l(\min\set*{n^w, \frac{r^{w+1} \log n}{\log r}}\r).
\]
We split the proof of this result into two regimes. If $r = \cO(\sqrt{n})$, we use \cref{thm:tn1r} to show that $t(n,w,r) = \Omega(r^{w+1} \log n / \log r)$. We will do this with the following result.

\begin{lemma}\label{lem:reducetnwrtotn1r}
    If $w, r, s \ge 1$ and $w + r \le n$, it holds that
    \[
        t(n,w+1,r+s) \ge \l(\frac{w + r}{w}\r)^w \cdot t(n-w-r,1,s).
    \]
\end{lemma}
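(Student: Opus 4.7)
The plan is to take a $(w{+}1, r{+}s)$-cover-free family $\cF = \{F_1, \dots, F_n\}$ on ground set $[t]$, single out the first $w+r$ sets of $\cF$, and partition (part of) $[t]$ into the $\binom{w+r}{w}$ pairwise disjoint blocks indexed by the $w$-subsets of $[w+r]$. For each $w$-subset $S \subs [w+r]$ set
\[
    T_S = \bigcap_{i \in S} F_i \,\sm\, \bigcup_{i \in [w+r] \sm S} F_i,
\]
the elements that lie in exactly the sets $\{F_i : i \in S\}$ among $F_1, \dots, F_{w+r}$. Disjointness of the $T_S$ is immediate, and hence $\sum_{|S| = w} |T_S| \le t$.

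The core of the argument is to show that, for every such $S$, the restriction
\[
    \cF_S = \set*{A \cap T_S : A \in \cF \sm \{F_1, \dots, F_{w+r}\}}
\]
is a $(1, s)$-cover-free family of $n - w - r$ distinct non-empty sets on ground set $T_S$, which by the definition of $t$ forces $|T_S| \ge t(n - w - r, 1, s)$. Everything will follow from a single invocation of the $(w{+}1, r{+}s)$-cover-free property of $\cF$ with top sets $\{F_i : i \in S\} \cup \{A\}$ and covering sets $\{F_j : j \in [w+r] \sm S\} \cup \{B_1, \dots, B_s\}$, which produces an element of $A \cap T_S$ outside every $B_j$. Specialising the $B_j$'s yields non-emptiness (any $s$ dummy choices), distinctness of the restrictions (set $B_1 = A'$ whenever $A \ne A'$), and the $(1, s)$-cover-free condition itself. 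The mild hypothesis required is $n - w - r \ge s + 1$ so that enough auxiliary $B_j$'s exist; for smaller $n$ the quantity $t(n - w - r, 1, s)$ is bounded by a constant and the stated inequality is essentially trivial.

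Summing over the $\binom{w+r}{w}$ disjoint blocks then gives
\[
    t \;\ge\; \binom{w+r}{w} \cdot t(n - w - r, 1, s),
\]
and the lemma follows from the elementary estimate
\[
    \binom{w+r}{w} = \prod_{i=1}^{w} \frac{r+i}{i} \;\ge\; \l(\frac{w+r}{w}\r)^{\!w},
\]
obtained by bounding each factor $(r+i)/i$ below by its value at $i = w$. The only real obstacle is orchestrating a single application of the cover-free property that simultaneously certifies non-emptiness, pairwise distinctness, and non-coverage on every block; after that, the disjoint-blocks accounting does all the work, and in fact delivers the slightly stronger binomial-coefficient bound for free.
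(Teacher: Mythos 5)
Your proof is correct, and it takes a genuinely different route from the paper's. The paper argues probabilistically: it draws $w$ ``top'' sets and $r$ ``covering'' sets uniformly at random from $\cF$, shows that the set $X$ of ground elements lying in all the top sets and in none of the covering sets has expected size at most $(w/(w+r))^w\, t$, and then derives a contradiction from the fact that the restrictions of the remaining members of $\cF$ to $X$ cannot form a $(1,s)$-cover-free family. You replace the averaging step by a deterministic partition: the $\binom{w+r}{w}$ trace classes $T_S$ of the first $w+r$ sets are pairwise disjoint, and a single application of the $(w+1,r+s)$-cover-free property with top sets $\set{F_i : i \in S} \cup \set{A}$ and covers $\set{F_j : j \in [w+r] \sm S} \cup \set{B_1,\dots,B_s}$ certifies that each class carries a $(1,s)$-cover-free family of $n-w-r$ distinct nonempty sets, whence $\abs{T_S} \ge t(n-w-r,1,s)$. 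One remark: since the definition of cover-free families allows the covering sets to repeat, you may take all $B_j$ equal (e.g.\ to $A'$ for distinctness, or to some $F_j$ with $j \notin S$ for nonemptiness), so your side condition $n-w-r \ge s+1$ is not actually needed; the only degenerate cases are $n-w-r \le 1$, where $t(n-w-r,1,s)=0$ and the bound is vacuous. Your approach buys the slightly stronger constant $\binom{w+r}{w} \ge ((w+r)/w)^w$ and sidesteps the paper's footnote about coinciding restrictions; what it does not buy is the shared probabilistic machinery that the paper reuses almost verbatim in \cref{lem:tnwrlowerbound} for the $r = \Omega(\sqrt{n})$ regime, where the random choice of $\cA$ is also used to control the event $\cA = \cF_x$.
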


\begin{proof}
    Let $\cF \subs \cP(t)$ be a $(w+1,r+s)$-cover-free family with $n$ sets, and suppose that
    \[
        t < \l(\frac{w + r}{w}\r)^w \cdot t(n-w-r,1,s).
    \]
    Pick $A_1, \dots, A_w \in \cF$ uniformly and independently at random. Let $\cA = \set{A_1, \dots, A_w}$. Then, pick $B_1, \dots, B_r \in \cF \sm \cA$ uniformly and independently at random and let $\cB = \set{B_1, \dots, B_r}$. For any $x \in [t]$, if $\alpha_x = \abs{\cF_x} / n$, we have
    \[
        \pr(\cA \subs \cF_x, \cB \subs \cF_x^c) \le \l(\frac{\abs{\cF_x}}{n}\r)^w \l(\frac{n - \abs{\cF_x}}{n - w}\r)^r = \l(\frac{n}{n-w}\r)^r \alpha_x^w \l(1 - \alpha_x\r)^r.
    \]
    Note that the term $\alpha_x^w \l(1 - \alpha_x\r)^r$ is maximised by $\alpha_x = w / (w + r)$, and so we get
    \[
        \pr(\cA \subs \cF_x, \cB \subs \cF_x^c) \le \l(\frac{n}{n - w}\r)^r \l(\frac{w}{w + r}\r)^w \l(\frac{r}{w + r}\r)^r \le \l(\frac{w}{w + r}\r)^w,
    \]
    where the last inequality used the fact that $w + r \le n$. Thus, if
    \[
        X = \set{x \in [t] : \cA \subs \cF_x, \cB \subs \cF_x^c},
    \]
    then $\ex(\abs{X}) \le (w / (w + r))^w \cdot t < t(n-w-r,1,s)$, and so with positive probability we have $\abs{X} < t(n-w-r,1,s)$. This implies that the sets $F \cap X$ with $F \in \cF \sm (\cA \cup \cB)$ cannot form a $(1,s)$-cover-free family\footnote{Note that it could happen that $F \cap X = F' \cap X$ for some distinct $F, F' \in \cF \sm (\cA \cup \cB)$. In that case, we could simply pick $A_{w+1} = F$ and $C_1 = \dots = C_s = F'$ in the argument that follows.}, and so there exist $A_{w+1} \in \cF \sm (\cA \cup \cB)$ and $C_1, \dots, C_s \in \cF \sm (\cA \cup \cB \cup \set{A_{w+1}})$ such that
    \[
        A_{w+1} \cap X \subs \bigcup_{i=1}^s (C_i \cap X) \subs \bigcup_{i=1}^s C_i.
    \]
    Now, suppose that $x \in \bigcap_{i=1}^{w+1} A_i$. If $x \notin X$, we must have $x \in \bigcup_{i=1}^r B_i$. Otherwise $x \in X$, so $x \in A_{w+1} \cap X$ and therefore $x \in \bigcup_{i=1}^s C_i$. So,
    \[
        \bigcap_{i=1}^{w+1} A_i \subs \bigcup_{i=1}^r B_i \cup \bigcup_{i=1}^s C_i.
    \]
    This contradicts the fact that $\cF$ is $(w+1,r+s)$-cover-free.
\end{proof}

If $r = \Omega(\sqrt{n})$, then $\log r = \Theta(\log n)$, and so the lower bound we want to show simplifies to $t(n,w,r) = \Omega(\min\set{n^w, r^{w+1}})$. Using arguments very similar to the proof of the previous lemma, we show that this is true.

\begin{lemma}\label{lem:tnwrlowerbound}
    If $w, r, s \ge 1$ and $w + r \le n$, it holds that
    \[
        t(n,w,r+s) \ge \min\set*{\frac{1}{2} \l(\frac{n}{w}\r)^w, \frac{s}{2} \l(\frac{w + r}{w}\r)^w}.
    \]
\end{lemma}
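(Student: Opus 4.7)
The plan is to adapt the probabilistic argument of \cref{lem:reducetnwrtotn1r}. Let $\cF \subs \cP(t)$ be a $(w,r+s)$-cover-free family with $\abs{\cF} = n$ and suppose for contradiction that
\[
    t < \min\l\{\frac{1}{2}\l(\frac{n}{w}\r)^w,\ \frac{s}{2}\l(\frac{w+r}{w}\r)^w\r\}.
\]
Pick $A_1, \dots, A_w \in \cF$ uniformly and independently at random, then $B_1, \dots, B_r \in \cF \sm \cA$ uniformly and independently at random, and let $X = \set{x \in [t] : \cA \subs \cF_x, \cB \subs \cF_x^c}$. The calculation from the proof of \cref{lem:reducetnwrtotn1r}, which uses $w + r \le n$, carries over unchanged to give $\ex(\abs{X}) \le (w/(w+r))^w \cdot t < s/2$, and so $\pr(\abs{X} < s) > 1/2$ by Markov's inequality.

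The new difficulty compared to \cref{lem:reducetnwrtotn1r} is that producing a cover-free violation requires, for each $x \in X$, a set in $\cF \sm \cA$ containing $x$; such a set fails to exist precisely when $\cF_x = \cA$. To control such elements, let $X_0 = \set{x \in [t] : \cF_x = \cA}$. For every $x \in [t]$, the event $\cF_x = \cA$ requires the tuple $(A_1,\dots,A_w)$ to surject onto $\cF_x$, so a quick count gives $\pr(\cF_x = \cA) \le w^w/n^w = (w/n)^w$. Summing over $x$ yields $\ex(\abs{X_0}) \le t(w/n)^w < 1/2$ by the assumed bound on $t$, hence $\pr(X_0 = \emptyset) > 1/2$.

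By a union bound, with positive probability both $\abs{X} < s$ and $X_0 = \emptyset$ hold. Fix such an outcome. Then $\cA \subsetneq \cF_x$ and $\cF_x \cap \cB = \emptyset$ for every $x \in X$, so we may choose $D_x \in \cF_x \sm (\cA \cup \cB)$. Since $\abs{X} < s$, we pad the family $\set{D_x : x \in X}$ with arbitrary further distinct elements of $\cF \sm (\cA \cup \cB)$ to obtain distinct sets $C_1, \dots, C_s \in \cF \sm (\cA \cup \cB)$ with $X \subs \bigcup_k C_k$ (this padding uses $n \ge w + r + s$; otherwise the $(w,r+s)$-cover-free condition is vacuous). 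The inclusion $\bigcap_i A_i \subs X \cup \bigcup_j B_j$, which is immediate from the definition of $X$, then yields $\bigcap_i A_i \subs \bigcup_j B_j \cup \bigcup_k C_k$, contradicting the $(w,r+s)$-cover-free property.

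The main obstacle, relative to \cref{lem:reducetnwrtotn1r}, is the handling of bad elements $x$ with $\cF_x = \cA$: the second term $\frac{1}{2}(n/w)^w$ in the minimum is precisely what allows the expected size of $X_0$ to be bounded below $\frac{1}{2}$ via the surjection count.
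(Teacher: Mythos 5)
Your proof is correct and follows essentially the same route as the paper: bound $\ex(\abs{X})$ by $s/2$ using the optimisation of $\alpha^w(1-\alpha)^r$, separately bound the probability that $\cF_x = \cA$ for some $x$ by $t(w/n)^w < 1/2$ (this is exactly where the $\frac{1}{2}(n/w)^w$ term enters, as you note), and then cover $\bigcap_i A_i$ by the $B_j$'s together with one witness set per element of $X$. The only quibble is your parenthetical about padding: for $w+r \le n < w+r+s$ the cover-free condition is \emph{not} vacuous (the $B_i$'s in the definition need not be distinct), but no padding is needed anyway --- exhibiting at most $r + \abs{X} \le r+s$ covering sets in $\cF \sm \cA$, with repetition if one insists on exactly $r+s$ of them, already contradicts $(w,r+s)$-cover-freeness, which is how the paper concludes.
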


\begin{proof}
    Let $\cF \subs \cP(t)$ be a $(w,r+s)$-cover-free family with $n$ sets, and suppose that
    \[
        t < \min\set*{\frac{1}{2} \l(\frac{n}{w}\r)^w, \frac{s}{2} \l(\frac{w + r}{w}\r)^w}.
    \]
    We proceed as in the proof of \cref{lem:reducetnwrtotn1r}. Pick $A_1, \dots, A_w \in \cF$ uniformly at random and let $\cA = \set{A_1, \dots, A_w}$. Then, pick $B_1, \dots, B_r \in \cF \sm \cA$ uniformly at random and let $\cB = \set{B_1, \dots, B_r}$. For any $x \in [t]$, we have
    \[
        \pr(\cA \subs \cF_x, \cB \subs \cF_x^c) \le \l(\frac{w}{w + r}\r)^w.
    \]
    Thus, if $X = \set{x \in [t] : \cA \subs \cF_x, \cB \subs \cF_x^c}$, then $\ex(\abs{X}) \le (w / (w + r))^w \cdot t \le s / 2$, and so $\pr(\abs{X} > s) \le 1/2$. Moreover, the event $\set{\cA = \cF_x}$ can only happen if $\abs{\cF_x} \le w$, and in that case we have $\pr(\cA = \cF_x) \le (w / n)^w$. So,
    \[
        \pr(\cA = \cF_x \text{ for some } x \in [t]) \le t \cdot \l(\frac{w}{n}\r)^w < \frac{1}{2}.
    \]
    This implies that with positive probability we have $\abs{X} \le s$ and $\cA \neq \cF_x$ for all $x \in [t]$. Note that for each $x \in X$ we have $\cA \subs \cF_x$, and so there must exist some $C_x \in \cF_x \sm \cA$.
    
    Now, suppose that $x \in \bigcap_{i=1}^w A_i$. If $x \notin X$, we must have $x \in \bigcup_{i=1}^r B_i$. Otherwise $x \in X$, and so $x \in C_x$. Therefore,
    \[
        \bigcap_{i=1}^w A_i \subs \bigcup_{i=1}^r B_i \cup \bigcup_{x \in X} C_x.
    \]
    This contradicts the fact that $\cF$ is $(w,r+s)$-cover-free.
\end{proof}

Putting these two lemmas together, we obtain the desired result.

\begin{proof}[Proof of \cref{thm:tnwr}]
    If $r = \cO(\sqrt{n})$, \cref{lem:reducetnwrtotn1r} together with \cref{thm:tn1r} implies that
    \[
        t(n,w,r) \ge \l(\frac{w-1+\ceil*{\frac{r}{2}}}{w-1}\r)^{w-1} \cdot t\l(n+1-w-\ceil*{\frac{r}{2}},1,\floor*{\frac{r}{2}}\r) = \Omega\l(\frac{r^{w+1} \log n}{\log r}\r).
    \]
    Otherwise, if $r = \Omega(\sqrt{n})$, \cref{lem:tnwrlowerbound} shows that
    \[
        t(n,w,r) \ge \min\set*{\frac{1}{2} \l(\frac{n}{w}\r)^w, \frac{\floor*{\frac{r}{2}}}{2} \l(\frac{w + \ceil*{\frac{r}{2}}}{w}\r)^w} = \Omega\l(\min\set*{n^w, \frac{r^{w+1} \log n}{\log r}}\r). \qedhere
    \]
\end{proof}

\section{Deterministic non-adaptive algorithms}
\label{sec:detalg}

In this section, we use the lower bound for cover-free families to prove a lower bound on the number of queries of deterministic non-adaptive algorithms. Formally, such an algorithm corresponds to a family of queries $\cQ \subs \cP(V)$ with the following property: for all distinct graphs $G$ and $H$ with maximum degree $\Delta$ there exists a query $Q \in \cQ$ such that no set of vertices is a maximal independent set of both $G[Q]$ and $H[Q]$. We call such a family $\cQ$ a \defn{query scheme}.

To bound the number of queries in such schemes, we relate them to cover-free families. Our main observation is that a family of queries $\cQ \subs \cP(V)$ is a query scheme if and only if its dual $\cQ^*$ is $(2,r)$-cover-free for $r \approx 2 \Delta$. This was essentially already proved by Konrad, O'Sullivan, and Traistaru \cite{KOT24}, but with different terminology. We provide a proof of the direction necessary for the lower bound in \cref{lem:detnonadboundscoverfree}.

\begin{lemma}\label{lem:reconstratcoverfree}
    If $\cQ \subs \cP(V)$ is a query scheme, then $\cQ^*$ is $(2,2\Delta-2)$-cover-free.
\end{lemma}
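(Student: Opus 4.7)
I prove the contrapositive: if $\cQ^*$ is not $(2, 2\Delta-2)$-cover-free, then $\cQ$ is not a query scheme. Let $A_1, A_2 \in \cQ^*$ and $B_1, \dots, B_{2\Delta-2} \in \cQ^* \sm \set{A_1, A_2}$ witness the failure, so $A_1 \cap A_2 \subs \bigcup_i B_i$. Choose representative vertices $u_1, u_2 \in V$ with $\cQ_{u_j} = A_j$ and distinct representatives $v_1, \dots, v_s \in V \sm \set{u_1, u_2}$ (with $s \le 2\Delta - 2$) realising the distinct $B_i$'s; I focus on the main case $A_1 \ne A_2$, in which $u_1 \ne u_2$. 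Writing $V^* := \set{v_1, \dots, v_s}$, the failure of cover-freeness translates into the key property that every $Q \in \cQ$ containing both $u_1$ and $u_2$ also meets $V^*$.

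Next, partition $V^* = N_1 \sqcup N_2$ with $\abs{N_j} \le \Delta - 1$, let $H$ be the graph on $V$ with edges $\set{u_j v : v \in N_j, \; j \in \set{1,2}}$, and set $G := H + u_1 u_2$. Both $G$ and $H$ are distinct graphs of maximum degree at most $\Delta$, so it suffices to exhibit, for every $Q \in \cQ$, a common maximal independent set of $G[Q]$ and $H[Q]$. If $u_1 \notin Q$ or $u_2 \notin Q$, then $G[Q] = H[Q]$ and the conclusion is immediate. Otherwise $\set{u_1, u_2} \subs Q$, and the key property gives $T_1 \cup T_2 \ne \emptyset$ where $T_j := N_j \cap Q$. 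Writing $W := Q \sm (\set{u_1, u_2} \cup V^*)$, I claim that
\[
    S := W \cup T_1 \cup T_2 \cup \set{u_j : T_j = \emptyset}
\]
is a maximal independent set of both $G[Q]$ and $H[Q]$, verified by direct case analysis on which of $T_1, T_2$ are empty.

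The main obstacle is checking the simultaneous maximality and independence of $S$ in both graphs. Since $T_1 \cup T_2 \ne \emptyset$, at most one of $u_1, u_2$ lies in $S$, so the extra edge $u_1 u_2$ of $G$ never sits inside $S$; and whichever of $u_1, u_2$ lies outside $S$ is already blocked in $H[Q]$ by its nonempty $T_j$-neighbourhood, hence also blocked in $G[Q]$. This shadowing of the edge $u_1 u_2$ is precisely what the cover-freeness failure enables, contradicting the query-scheme property of $\cQ$.
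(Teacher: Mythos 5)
Your proof is correct and follows essentially the same route as the paper's: the paper also reduces to two distinct vertices $u,v$ and a set $W$ of at most $2\Delta-2$ vertices with $\cQ_u\cap\cQ_v\subs\bigcup_{w\in W}\cQ_w$, takes as its two graphs an edge-minimal graph of maximum degree $\Delta-1$ with $N(\set{u,v})=W$ (your $H$, with $W=V^*$ split into $N_1\sqcup N_2$) together with that graph plus the edge $uv$ (your $G$), and uses the same common maximal independent set, namely $Q\sm\set{u_1,u_2}$ augmented by whichever $u_j$ has no neighbour in $Q$. The differences are purely presentational (you make the partition and the case analysis explicit where the paper just says ``extend $Q\sm\set{u,v}$ to a maximal independent set''), and the degenerate case $A_1=A_2$ that you set aside is likewise glossed over in the paper, which simply asserts the existence of two distinct witnessing vertices.
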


\begin{proof}
    If $\cQ^*$ is not $(2,2\Delta-2)$-cover-free, there exist two distinct vertices $u, v \in V$ and a set of vertices $W \subs V \sm \set{u, v}$ with $\abs{W} \le 2 \Delta - 2$ such that $\cQ_u \cap \cQ_v \subs \bigcup_{w \in W} \cQ_w$. Let $G$ be a graph with the least number of edges that has maximum degree $\Delta - 1$ and satisfies $N_G(\set{u, v}) = W$. Let $H = G \cup \set{uv}$.
    
    For every $Q \in \cQ$, let $I_Q$ be a maximal independent set of $G[Q]$ with $Q \sm \set{u, v} \subs I_Q$. We claim that $I_Q$ is also a maximal independent set of $H[Q]$. Indeed, otherwise we must have $u, v \in I_Q \subs Q$. Then, $Q \in \cQ_u \cap \cQ_v \subs \bigcup_{w \in W} \cQ_w$, and so $W \cap Q \neq \emptyset$. This implies that
    \[
        N_G(\set{u, v}) \cap I_Q = W \cap I_Q \sups W \cap (Q \sm \set{u, v}) = W \cap Q \neq \emptyset.
    \]
    Since $u, v \in I_Q$, this contradicts the fact that $I_Q$ is an independent set of $G$. So, for all $Q \in \cQ$, $I_Q$ is a maximal independent set of both $G[Q]$ and $H[Q]$. This contradicts the fact that $\cQ$ is a query scheme.
\end{proof}

Using arguments from Konrad, O'Sullivan, and Traistaru \cite{KOT24}, we can obtain the following converse.

\begin{lemma}\label{lem:coverfreereconstrat}
    Let $\cQ \subs \cP(V)$. If $\cQ^*$ is $(2,2\Delta)$-cover-free, then $\cQ$ is a query scheme.
\end{lemma}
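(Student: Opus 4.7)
The plan is to prove the contrapositive: if $\cQ$ is not a query scheme, then $\cQ^*$ fails to be $(2, 2\Delta)$-cover-free. I would therefore assume there exist distinct graphs $G$ and $H$ of maximum degree $\Delta$ such that for every $Q \in \cQ$ some set $I_Q$ is a maximal independent set of both $G[Q]$ and $H[Q]$. Since $G \neq H$, by swapping the roles of $G$ and $H$ if necessary, I can fix an edge $uv \in E(H) \sm E(G)$.

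The key step is to identify a small set $W$ such that every query in $\cQ$ containing both $u$ and $v$ must also meet $W$. I would set $W = N_G(u) \cup N_G(v)$. Because $G$ has maximum degree $\Delta$ and $uv \notin E(G)$ (so $u \notin N_G(v)$ and $v \notin N_G(u)$), we have $W \subs V \sm \set{u, v}$ with $\abs{W} \le 2\Delta$. The heart of the argument is then to establish
\[
    \cQ_u \cap \cQ_v \subs \bigcup_{w \in W} \cQ_w,
\]
which is exactly a violation of the $(2, 2\Delta)$-cover-free property of $\cQ^*$.

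To prove this containment, I would fix $Q \in \cQ_u \cap \cQ_v$, so $u, v \in Q$. Since $uv \in E(H)$ and $I_Q$ is independent in $H[Q]$, at least one of $u, v$ is absent from $I_Q$; say $v \notin I_Q$. Maximality of $I_Q$ in $G[Q]$ together with $v \in Q \sm I_Q$ then forces some $w \in I_Q$ to be $G$-adjacent to $v$, so $w \in N_G(v) \subs W$ and $w \in I_Q \subs Q$, giving $Q \in \cQ_w$, as required. I do not expect any significant obstacle; the main decision is to take neighborhoods in $G$ rather than in $H$, which is what keeps $\abs{W} \le 2\Delta$ and $u, v \notin W$ simultaneously, and it mirrors the forward direction of \cref{lem:reconstratcoverfree}, in which $G$ is built precisely so that $uv$ is the only edge distinguishing $G$ from $H$.
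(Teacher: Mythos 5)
Your proof is correct. The paper states this lemma without proof (deferring to arguments of Konrad, O'Sullivan, and Traistaru), and your contrapositive argument --- extracting an edge $uv \in E(H) \sm E(G)$ from an indistinguishable pair, taking $W = N_G(u) \cup N_G(v)$ so that $\abs{W} \le 2\Delta$ and $u,v \notin W$, and using maximality of $I_Q$ in $G[Q]$ to force $Q$ to meet $W$ --- is precisely the converse of the paper's proof of \cref{lem:reconstratcoverfree}, so it fills the omitted step in essentially the intended way.
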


Using these two results, we can now prove \cref{lem:detnonadboundscoverfree}. Together with \cref{thm:tnwr}, this proves \cref{thm:detnonadlowerbound}.

\begin{proof}[Proof of \cref{lem:detnonadboundscoverfree}]
    Let $\cQ \subs \cP(V)$ be a query scheme. Then, by \cref{lem:reconstratcoverfree}, we know that $\cQ^* \subs \cP(\cQ)$ is $(2,2\Delta-2)$-cover-free, and so
    \[
        \abs{\cQ} \ge t(\abs{\cQ^*},2,2\Delta-2) = t(n,2,2\Delta-2).
    \]
    On the other hand, if $t = t(n,2,2\Delta)$, there exist a family $\cR \subs \cP(t)$ with $\abs{\cR} = n$ that is $(2,2\Delta)$-cover-free. By identifying $\cR$ with $V$, we obtain a family $\cQ = \cR^* \subs \cP(V)$ with $\abs{\cQ} = t$ such that $\cQ^* \cong \cR$ is $(2,2\Delta)$-cover-free. Then, by \cref{lem:coverfreereconstrat}, $\cQ$ is a query scheme with $t = t(n,2,2\Delta)$ queries.
\end{proof}

\section{Randomised algorithms}
\label{sec:randalg}

Finally, we want to lower bound the number of queries of randomised algorithms, both in the adaptive and non-adaptive setting. We begin with some general observations.

First, note that any randomised algorithm $\cA$ can randomly decide in advance for every possible execution path which set of vertices it wants to query next. This gives a deterministic strategy $S$ that the algorithm will then perform to reconstruct the graph. From now on, we will assume that every algorithm is of this form. That is, $\cA$ randomly selects in advance a deterministic strategy $S \in \cS$ from some family of strategies $\cS$, and it will then execute this strategy to reconstruct the graph.

The following lemma tells us that by analysing the strategies $S \in \cS$ separately, we can show that there is a graph that will likely be incorrectly reconstructed by $\cA$.

\begin{lemma}\label{lem:strattoalg}
    Let $\cG$ be a family of graphs and $\cS$ be a family of strategies of a randomised algorithm $\cA$. If $G \in \cG$ is a random graph, then there exists a graph $H \in \cG$ such that
    \[
        \pr(\cA \text{ reconstructs } H) \le \sup_{S \in \cS} \pr(S \text{ reconstructs } G).
    \]
\end{lemma}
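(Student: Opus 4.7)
The plan is to apply the standard averaging trick behind Yao's minimax principle. By the setup in the preceding paragraph, running $\cA$ on a fixed graph $H$ amounts to drawing a strategy $S$ from some distribution $\mu$ on $\cS$ and then executing $S$ on $H$. Thus
\[
    \pr(\cA \text{ reconstructs } H) = \ex_{S \sim \mu}\bigl[\ind(S \text{ reconstructs } H)\bigr].
\]
This rewrites $\cA$'s success probability as an expectation over a quantity that depends deterministically on $S$ and $H$, which is the only fact about $\cA$ we really need.

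Next, I would average the above identity over $H$ drawn from the distribution of the random graph $G \in \cG$, and swap the order of the two expectations by Fubini:
\[
    \ex_G\bigl[\pr(\cA \text{ reconstructs } G)\bigr] = \ex_G \ex_{S \sim \mu}\bigl[\ind(S \text{ reconstructs } G)\bigr] = \ex_{S \sim \mu} \bigl[\pr(S \text{ reconstructs } G)\bigr].
\]
The right-hand side is at most $\sup_{S \in \cS} \pr(S \text{ reconstructs } G)$, since an expectation over $\mu$ is bounded by the pointwise supremum of the integrand. Therefore $\ex_G[\pr(\cA \text{ reconstructs } G)] \le \sup_{S \in \cS}\pr(S \text{ reconstructs } G)$.

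Finally, any random variable takes some value at most its mean, so there exists $H \in \cG$ with
\[
    \pr(\cA \text{ reconstructs } H) \le \ex_G\bigl[\pr(\cA \text{ reconstructs } G)\bigr] \le \sup_{S \in \cS} \pr(S \text{ reconstructs } G),
\]
which is exactly the claim. There is no real obstacle here; the only point that requires a bit of care is the initial reformulation of $\cA$'s reconstruction probability as an expectation over the auxiliary distribution $\mu$ on $\cS$, which is guaranteed by the earlier observation that any randomised algorithm can be regarded as a distribution over deterministic strategies.
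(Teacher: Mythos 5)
Your proposal is correct and is essentially the paper's own argument: both compute the joint expectation of the success indicator over the random graph $G$ and the algorithm's internal randomness, condition on the chosen strategy $S$ to bound it by $\sup_{S \in \cS} \pr(S \text{ reconstructs } G)$, and then pick an $H$ achieving at most the mean. The only cosmetic difference is that the paper writes the swap of expectations as an explicit sum over $\cS$ rather than invoking Fubini.
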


\begin{proof}
    Let $p = \sup_{S \in \cS} \pr(S \text{ reconstructs } G)$. Denote the indicator variable of an event $E$ by $\ind_E$. Then, it holds that
    \begin{align*}
        \ex\l(\ind_{\set{\cA \text{ reconstructs } G}}\r) & = \sum_{S \in \cS} \pr(\cA \text{ selects } S) \cdot \ex\l(\ind_{\set{S \text{ reconstructs } G}}\r) \\
        & = \sum_{S \in \cS} \pr(\cA \text{ selects } S) \cdot \pr(S \text{ reconstructs } G) \\
        & \le \sum_{S \in \cS} \pr(\cA \text{ selects } S) \cdot p = p.
    \end{align*}
    In particular, there exists a graph $H \in \cG$ such that
    \[
        \pr(\cA \text{ reconstructs } H) = \ex\l(\ind_{\set{\cA \text{ reconstructs } H}}\r) \le p. \qedhere
    \]
\end{proof}

\subsection{Randomised adaptive algorithms}

We start by providing a lower bound on the number of queries of a randomised adaptive algorithm. For this, we will consider the family of graphs that contain a clique $U$ of size $\Theta(\Delta)$ while $V \sm U$ is an independent set.

Then, for every query $Q \subs V$ of the algorithm, either $Q \cap (V \sm U)$ is a maximal independent set, or there is a vertex $u \in Q \cap U$ such that $(Q \cap (V \sm U)) \cup \set{u}$ is a maximal independent set. So, the answer to a query will always be one out of at most $\Theta(\Delta)$ many alternatives. This implies that with $t$ queries, the algorithm can correctly reconstruct at most $\cO(\Delta^t)$ many graphs. By counting the number of graphs in our family, this yields the desired lower bound on $t$.

\begin{proof}[Proof of \cref{thm:randadlowerbound}]
    Let $U \subs V$ be a set with $\abs{U} = \ceil{\Delta / 2}$ and let $\cG$ be the family of graphs $G$ with maximum degree $\Delta$ such that $U$ is a clique of $G$ and $V \sm U$ is an independent set of $G$. Since every vertex of $U$ can have up to $\Delta - (\abs{U} - 1)$ neighbours in $V \sm U$, we get
    \[
        \abs{\cG} \ge \binom{n - \abs{U}}{\Delta - (\abs{U} - 1)}^{\abs{U}} \ge \binom{n - \Delta}{\ceil{\frac{\Delta}{2}}}^{\ceil{\frac{\Delta}{2}}} \ge \l(\frac{n - \Delta}{\ceil{\frac{\Delta}{2}}}\r)^{\ceil{\frac{\Delta}{2}}^2} \ge \l(\frac{n - \Delta}{\Delta}\r)^{\frac{\Delta^2}{4}}.
    \]
    Now, let $\cS$ be the family of strategies of a randomised adaptive algorithm $\cA$ with success probability at least $1/2$, and suppose that $\cA$ makes at most $t$ queries. Pick a graph $G \in \cG$ uniformly at random, and fix a strategy $S \in \cS$.
    
    For every possible query $Q \subs V$, note that either $Q \cap (V \sm U)$ is a maximal independent set of $G[Q]$, or there is a vertex $u \in Q \cap U$ such that $(Q \cap (V \sm U)) \cup \set{u}$ is a maximal independent set of $G[Q]$. So, when executing $S$ on graphs from $\cG$, the answer to each query of $S$ is always one out of $\abs{Q \cap U} + 1 \le \Delta + 1$ different sets. Since $S$ is deterministic and makes at most $t$ queries, it follows that $S$ will correctly reconstruct at most $(\Delta + 1)^t$ graphs from $\cG$. Therefore,
    \[
        \pr(S \text{ reconstructs } G) \le \frac{(\Delta + 1)^t}{\abs{\cG}}.
    \]
    Given that the success probability of $\cA$ is at least $1/2$, it follows by \cref{lem:strattoalg} that $(\Delta + 1)^t / \abs{\cG} \ge 1/2$. If $\Delta \le n/3$, this implies that
    \[
        t \ge \frac{\log\l(\frac{\abs{\cG}}{2}\r)}{\log(\Delta + 1)} \ge \frac{\Delta^2 \log\l(\frac{n - \Delta}{\Delta}\r) - 4 \log 2}{4 \log(\Delta + 1)} = \Omega\l(\frac{\Delta^2 \log\l(\frac{n}{\Delta}\r)}{\log \Delta}\r).
    \]
    For $\Delta \ge n/3$, we can simply apply this argument with $\Delta = n/3$ to obtain the desired lower bound.
\end{proof}

\subsection{Randomised non-adaptive algorithms}

To obtain a lower bound on the number of queries of a randomised non-adaptive algorithm, we adapt the arguments of the proof from the last section. In this proof, recall that the answer to a query $Q$ of the algorithm is always one out of $\abs{Q \cap U} + 1$ many alternatives. We used that this is at most $\Delta + 1$, but if $Q \cap U$ is much smaller than this for most queries, then we should get a better lower bound on the number of queries.

For adaptive algorithms, we cannot hope for this to be true because such algorithms can easily identify $U$ with the first query and then always ensure that $U \subs Q$. For non-adaptive algorithms, however, a randomly chosen $U$ will usually have a very small intersection with a query $Q$ of the algorithm, unless $Q$ is large. But if $Q$ is large, then most vertices of $U$ will have a neighbour in $Q$ which again restricts the number of possible answers to the query. Using this idea, we can improve the lower bound on the number of queries for non-adaptive algorithms by a factor of $\log \Delta$.

\begin{proof}[Proof of \cref{thm:randnonadlowerbound}]
    For disjoint sets $U, W \subs V$ with $\abs{U} = \ceil{\Delta / 3}$ and $\abs{W} = \floor{\Delta / 3}$, let $\cG_{U,W}$ be the family of graphs $G$ with maximum degree $\Delta$ such that $U$ is a clique of $G$, $V \sm U$ is an independent set of $G$, and all vertices in $U$ are connected to all vertices in $W$. Let $\cG$ be the union of all the families $\cG_{U,W}$. Note that
    \[
        N = \abs{\cG_{U,W}} \ge \binom{n - \abs{U} - \abs{W}}{\Delta - (\abs{U} - 1) - \abs{W}}^{\abs{U}} \ge \binom{n - \Delta}{\ceil{\frac{\Delta}{3}}}^{\ceil{\frac{\Delta}{3}}} \ge \l(\frac{n - \Delta}{\Delta}\r)^{\frac{\Delta^2}{9}}.
    \]
    Let $\cS$ be the family of strategies of a randomised non-adaptive algorithm $\cA$ with success probability at least $1/2$, and suppose that $\cA$ makes at most $t$ queries. Pick a graph $G \in \cG$ at random by first picking the sets $U,W \subs V$ uniformly at random and then picking $G \in \cG_{U,W}$ uniformly at random. Fix a strategy $S \in \cS$. Since $\cA$ is non-adaptive, the strategy $S$ corresponds of a family of queries $\cQ \subs \cP(V)$ with $\abs{\cQ} \le t$.

    For every query $Q \in \cQ$, note that $Q \cap (V \sm U)$ is a maximal independent set of $G[Q]$ whenever $Q \cap W \neq \emptyset$. Otherwise, either $Q \cap (V \sm U)$ or $(Q \cap (V \sm U)) \cup \set{u}$ for some $u \in Q \cap U$ is a maximal independent set of $G[Q]$. So, when executing $S$ on graphs from $\cG_{U,W}$, the answer to each query $Q \in \cQ$ is always one out of $D_Q$ different sets, where $D_Q = 1$ if $Q \cap W \neq \emptyset$ and $D_Q = \abs{Q \cap U} + 1$ otherwise. It follows that, $S$ will correctly reconstruct at most $D = \prod_{Q \in \cQ} D_Q$ graphs from $\cG_{U,W}$. Therefore, for any $M \ge 0$,
    \[
        \pr\l(S \text{ reconstructs } G \st D \le M\r) \le \frac{M}{N}.
    \]
    Note that
    \[
        D_Q = 1 + \sum_{u \in U} \ind_{\set{u \in Q, W \subs Q^c}}.
    \]
    Therefore,
    \[
        \ex(\log D_Q) \le \ex(D_Q) \le 1 + \sum_{u \in U} \ex\l(\ind_{\set{u \in Q, W \subs Q^c}}\r) = 1 + \sum_{u \in U} \pr(u \in Q, W \subs Q^c).
    \]
    Let $\alpha_Q = \abs{Q} / n$. Since $U$ and $W$ are chosen uniformly at random, we have
    \[
        \pr(u \in Q, W \subs Q^c) \le \frac{\abs{Q}}{n} \l(\frac{n-\abs{Q}}{n-1}\r)^{\abs{W}} = \l(\frac{n}{n-1}\r)^{\abs{W}} \alpha_Q (1 - \alpha_Q)^{\abs{W}}.
    \]
    The term $\alpha_Q (1 - \alpha_Q)^{\abs{W}}$ is maximised by $\alpha_Q = 1 / (\abs{W} + 1)$, and so we get
    \[
        \pr(u \in Q, W \subs Q^c) \le \l(\frac{n}{n-1}\r)^{\abs{W}} \frac{1}{\abs{W}+1} \l(\frac{\abs{W}}{\abs{W}+1}\r)^{\abs{W}} \le \frac{1}{\abs{W}+1} \le \frac{3}{\Delta}.
    \]
    It follows that $\ex(\log D_Q) \le 1 + 3 \abs{U} / \Delta \le 4$ and therefore, by Markov's inequality,
    \[
        \pr\l(D \ge e^{16 t}\r) = \pr\l(\sum_{Q \in \cQ} \log D_Q \ge 16 t\r) \le \frac{\ex(\sum_{Q \in \cQ} \log D_Q)}{16 t} \le \frac{4 t}{16 t} = \frac{1}{4}.
    \]
    Overall, we get
    \begin{align*}
        \pr(S \text{ reconstructs } G) & \le \pr\l(D \ge e^{16 t}\r) + \pr\l(S \text{ reconstructs } G \st D \le e^{16 t}\r) \pr\l(D \le e^{16 t}\r) \\
        & \le \frac{1}{4} + \frac{e^{16 t}}{N}.
    \end{align*}
    Given that the success probability of $\cA$ is at least $1/2$, it follows by \cref{lem:strattoalg} that $e^{16 t} / N \ge 1/4$. If $\Delta \le n/3$, this implies that
    \[
        t \ge \frac{1}{16} \log\l(\frac{N}{4}\r) \ge \frac{\Delta^2}{144} \log\l(\frac{n - \Delta}{\Delta}\r) - \frac{1}{16} \log 4 = \Omega\l(\Delta^2 \log\l(\frac{n}{\Delta}\r)\r).
    \]
    For $\Delta \ge n/3$, we can simply apply this argument with $\Delta = n/3$ to obtain the desired lower bound.
\end{proof}

\section{Open problems}
\label{sec:openproblems}

In this paper, we improved the lower bounds on the number of queries required for graph reconstruction with maximal independent set queries. However, some gaps between the lower and upper bounds remain.

For deterministic non-adaptive algorithms, we showed that $\Omega(\Delta^3 \log n / \log \Delta)$ queries are necessary while the upper bound $\cO(\Delta^3 \log n)$ was given by Konrad, O'Sullivan, and Traistaru \cite{KOT24}. This is the same gap as the longstanding gap between the lower and upper bounds $\Omega(r^{w+1} \log n / \log r) \le t(n,w,r) \le \cO(r^{w+1} \log n)$ for cover-free families.

\begin{problem}
    Are there $(1,r)$-cover-free families $\cF \subs \cP(t)$ with $t \in \cO(r^2 \log \abs{\cF} / \log r)$?
\end{problem}

For randomised non-adaptive algorithms, the upper bound of Konrad, O'Sullivan, and Traistaru \cite{KOT24} is $\cO(\Delta^2 \log n)$. We proved that such algorithms need at least $\Omega(\Delta^2 \log(n / \Delta))$ queries, and ask whether this is tight.

\begin{problem}
    Is there a randomised non-adaptive algorithm which uses $\cO(\Delta^2 \log(n / \Delta))$ queries to reconstruct any graph of maximum degree $\Delta$ with high probability?
\end{problem}

Finally, for adaptive algorithms, we have the same upper bounds as for non-adaptive algorithms in both the deterministic and randomised setting. However, our best lower bound drops to $\Omega(\Delta^2 \log(n / \Delta) / \log \Delta)$ in both settings. This suggests that adaptive algorithms might perform better than their non-adaptive counterparts.

\begin{problem}
    Is there a deterministic adaptive algorithm that uses $o(\Delta^3 \log(n/\Delta))$ queries to reconstruct any graph of maximum degree $\Delta$?
\end{problem}

{
    \fontsize{11pt}{12pt}
    \selectfont
    
    \hypersetup{linkcolor=structurecolor}
\newcommand{\etalchar}[1]{$^{#1}$}

}
\end{document}